\documentclass[conference]{IEEEtran}
\IEEEoverridecommandlockouts
\usepackage{cite}
\usepackage{amsmath,amssymb,amsfonts}
\usepackage{algorithmic}
\usepackage{graphicx}
\usepackage{textcomp}
\usepackage{xcolor}
\def\BibTeX{{\rm B\kern-.05em{\sc i\kern-.025em b}\kern-.08em
    T\kern-.1667em\lower.7ex\hbox{E}\kern-.125emX}}

\usepackage{amsmath,amssymb,amsfonts,bm}
\usepackage{stmaryrd}
\usepackage[ruled,lined,french]{algorithm2e}
\usepackage{hyperref}

\usepackage{pgf}
\usepackage{subfig}
\graphicspath{{fig/}}

\usepackage{amsthm}
\newtheorem{theorem}{Theorem}
\usepackage{makecell}
\usepackage{booktabs} 


\newcommand{\intEnt}[2]{\left\{#1, \dots, #2\right\}}
\newcommand{\set}[1]{\left\{#1\right\}}

\newcommand{\myvector}[1]{{\bm{#1}}}
\newcommand{\mymatrix}[1]{{\bm{#1}}}
\newcommand{\eqdef}{\triangleq}

\newcommand{\diag}{\mathrm{diag}}

\newcommand{\suchthat}{\ | \ }


\newcommand{\E}{\mathrm{E}} 

\newcommand{\cent}{{\mathrm{c}}}




\newcommand{\R}{\mathbb{R}} 
\newcommand{\N}{\mathbb{N}} 

\newcommand{\rA}{\mathcal{A}}

\newcommand{\rK}{\mathcal{K}}
\newcommand{\rM}{\mathcal{M}}
\newcommand{\rN}{\mathcal{N}}

\newcommand{\ie}{{i.e.},}
\newcommand{\eg}{{e.g.},}

%



\begin{document}

\title{Split Covariance Intersection with Correlated Components for Distributed Estimation}

\author{\IEEEauthorblockN{Colin Cros}
\IEEEauthorblockA{\textit{Univ. Grenoble Alpes, CNRS, GIPSA-Lab} \\
Grenoble, France \\
colin.cros@gipsa-lab.fr}
\and
\IEEEauthorblockN{Pierre-Olivier Amblard}
\IEEEauthorblockA{\textit{Univ. Grenoble Alpes, CNRS, GIPSA-Lab}\\
Grenoble, France \\
pierre-olivier.amblard@cnrs.fr}
\and
\IEEEauthorblockN{Christophe Prieur}
\IEEEauthorblockA{\textit{Univ. Grenoble Alpes, CNRS, GIPSA-Lab} \\
Grenoble, France \\
christophe.prieur@gipsa-lab.fr}
\and
\IEEEauthorblockN{Jean-François Da Rocha}
\IEEEauthorblockA{\textit{Telespazio France} \\
Toulouse, France \\
jeanfrancois.darocha@telespazio.com}
}


\maketitle

\begin{abstract}
	This paper introduces a new conservative fusion method to exploit the correlated components within the estimation errors. Fusion is the process of combining multiple estimates of a given state to produce a new estimate with a smaller MSE. To perform the optimal linear fusion, the (centralized) covariance associated with the errors of all estimates is required. If it is partially unknown, the optimal fusion cannot be computed. Instead, a solution is to perform a conservative fusion. A conservative fusion provides a gain and a bound on the resulting MSE matrix which guarantees that the error is not underestimated. A well-known conservative fusion is the Covariance Intersection fusion. It has been modified to exploit the uncorrelated components within the errors. In this paper, it is further extended to exploit the correlated components as well. The resulting fusion is integrated into standard distributed algorithms where it allows exploiting the process noise observed by all agents. The improvement is confirmed by simulations.
\end{abstract}

\begin{IEEEkeywords}
	Conservative fusion, Covariance Intersection, Distributed estimation, Linear fusion
\end{IEEEkeywords}

\section{Introduction}

Distributed estimation is a recurrent problem in sensor networks. It consists in estimating the state of a dynamical system using a network of agents, \ie{} nodes equipped with sensors and with communication capabilities. Each agent is performing independent measurements of the state, and shares its estimate with its neighbors in the network. The fusion of the estimates received by an agent is a complex task, especially when the communication between the agents are limited. To optimally fuse several estimates, \ie{} with the smallest resulting Mean Square Error (MSE) matrix, the agent needs to know the covariances of the errors of each estimate and their cross-covariances. For example, the optimal fusion of two estimates is given by the Bar-Shalom-Campo's formulas \cite{bar1986effect}. However, in cooperating networks, each agent has only local knowledge; it can estimate the covariance of its own error but cannot estimate the cross-covariances with its neighbors' errors, as this would require knowledge of the whole network topology. Without these cross-covariances, the agent cannot calculate the MSE matrix of the fused estimate and must use \emph{conservative} bounds. A conservative fusion provides a bound on the MSE matrix which guarantees that the estimation error is not underestimated. The first conservative fusion proposed was the Covariance Intersection fusion (CI) \cite{julier1997nondivergent}. CI provides conservative bounds by considering that the estimation errors may be correlated to any degree. Generally, this assumption is very loose and tighter fusions have been derived when refined assumptions can be made. When the errors contain independent components, the Split CI (SCI) fusion provides tighter bounds \cite{julier2001general}. Furthermore, if the errors are partitioned into two components with only the cross-covariances between the first components unknown, another fusion rule called Partitioned CI (PCI) was proposed in \cite{petersen2011partitioned} and improved in \cite{ajgl2019rectification}. Another extension is the Inverse CI (ICI) which considers that the estimates were all obtained by combining some initial estimate with independent estimates \cite{noack2017decentralized, ajgl2020inverse}. All these improvements of the CI fusion use the structure of the errors to reduce the set of admissible cross-covariances and tighten the bounds. They have been applied to wide range of problems: \eg{} SLAM \cite{julier2007using}, cooperative localization \cite{li2013cooperative}, or cooperative perception \cite{lima2021data}.

In distributed estimation, the fusion of the estimates can be performed using CI. However, two elements can be used to produce tighter bounds. First, the independent measurements induce independent components in the errors. They can be exploited by the SCI fusion, as in \cite{julier2001general}. If the measurements are transmitted to the neighbors, the agents can optimally fuse them with their estimate, this method is known as Diffusion Kalman Filtering (DKF) with CI \cite{cattivelli2010diffusion, hu2011diffusion}. The second element is the process noise which is observed by all the agents. It reduces as well the set of admissible covariances: \eg{} the errors cannot be perfectly negatively correlated. The current fusions do not take advantage from such common noises as they do not consider correlated components.

In this paper, an extension of the SCI fusion, called Extended SCI (ESCI), is introduced. It is motivated by the distributed estimation problems in which it can exploit both the uncorrelated components of the errors (induced by the measurements) and their correlated components (induced by the process noise). It is integrated into standard algorithms and applied to an example inspired by Search-And-Rescue (SAR) missions.

The rest of the paper is organized as follows. Section~\ref{sec: Background} recalls the definitions of a conservative fusion and of the SCI fusion. Then, Section~\ref{sec: Extended SCI} presents the new ESCI fusion. Section~\ref{sec: Algorithms} details its integration into standard distributed estimation algorithms. The application is presented in Section~\ref{sec: Simulations}. Finally, Section~\ref{sec: Conclusion} concludes the paper.

\medbreak
\noindent \textbf{Notation.}
In the sequel, vectors are denoted in lowercase boldface letters \eg{} $\myvector{x} \in \R^n$, and matrices in uppercase boldface variables \eg{} $\mymatrix{M} \in \R^{n\times n}$. The notation $\E[\cdot]$ denotes the expected value. 
For two matrices $\mymatrix{A}$ and $\mymatrix{B}$, the notation $\mymatrix{A} \preceq \mymatrix{B}$  means that the difference $\mymatrix{B} - \mymatrix{A}$ is positive semi-definite. The unit simplex of $\R^n$ is denoted as $\rK^n \eqdef \set{\myvector{x} \in \R^n \suchthat \forall i, \, x_i \ge 0, \, \myvector{x}^{\intercal}\myvector{1} = 1}$.

\section{Background}\label{sec: Background}

Consider $N$ unbiased estimates $\myvector{\hat x}_i$ for $i \in \intEnt{1}{N}$ of a random variable $\myvector{x} \in \R^d$. The estimation errors are denoted as $\myvector{\tilde x}_i \eqdef \myvector{\hat x}_i - \myvector{x}$ and their covariances as $\mymatrix{\tilde P}_i \eqdef \E\left[\myvector{\tilde x}_i \myvector{\tilde x}_i^{\intercal}\right]$. A linear fusion is defined by a matrix of gains $\mymatrix{K} = \begin{bmatrix}\mymatrix{K}_1 & \dots & \mymatrix{K}_N\end{bmatrix} \in \R^{d \times Nd}$, with $\mymatrix{K}_i \in \R^{d\times d}$ and $\sum_i \mymatrix{K}_i = \mymatrix{I}_d$, as:
\begin{equation}
	\myvector{\hat x}_F(\mymatrix{K}) \eqdef \sum_{i=1}^N \mymatrix{K}_i \myvector{\hat x}_i = \mymatrix{K} \myvector{\hat x}_{\cent},
\end{equation}
where $\myvector{\hat x}_{\cent} \eqdef \begin{pmatrix}\myvector{\hat x}_1^{\intercal} & \dots & \myvector{\hat x}_N^{\intercal}\end{pmatrix}^{\intercal} \in \R^{Nd}$. The covariance of the error of the fused estimate depends on the gain $\mymatrix{K}$ and on the covariance of the error of $\myvector{\hat x}_{\cent}$, $\mymatrix{\tilde P}_{\cent}$:
\begin{equation}
	\mymatrix{\tilde P}_F(\mymatrix{K}, \mymatrix{\tilde P}_{\cent}) = \mymatrix{K} \mymatrix{\tilde P}_{\cent} \mymatrix{K}^{\intercal}.
\end{equation}
If $\mymatrix{\tilde P}_{\cent}$ is not entirely known but is only assumed to belong to some subset of admissible covariance matrices $\rA$, then $\mymatrix{\tilde P}_F(\mymatrix{K},\mymatrix{\tilde P}_{\cent})$ cannot be computed. In this case, an alternative is to provide a \emph{conservative} bound. A couple $(\mymatrix{K}, \mymatrix{B}_F)$ is said to generate a conservative fusion for the set $\rA$ if:
\begin{align}
	\forall \mymatrix{P}_{\cent} &\in \rA, & \mymatrix{\tilde P}_F(\mymatrix{K}, \mymatrix{P}_{\cent}) &\preceq \mymatrix{B}_F.
\end{align}
In other words, fusing the estimates with the gain $\mymatrix{K}$ ensures that the covariance of the error is bounded by $\mymatrix{B}_F$.

CI considers that the covariances of the errors $\mymatrix{\tilde P}_i$ are known but not their cross-covariances $\mymatrix{\tilde P}_{i,j}\eqdef \E\left[\myvector{\tilde x}_i \myvector{\tilde x}_j^{\intercal}\right]$. For any $\myvector{\omega} \in \rK^N$, CI provides a conservative fusion defined as:
\begin{align}\label{eq: CI fusion}
	\myvector{\hat x}_F &= \mymatrix{B}_F \sum_{i=1}^N \omega_i \mymatrix{\tilde P}_i^{-1}\myvector{\hat x}_i, &
	\mymatrix{B}_F^{-1} &= \sum_{i=1}^N \omega_i \mymatrix{\tilde P}_i^{-1}.
\end{align}
CI considers that the errors may be completely correlated. In distributed estimation, the estimators integrate independent measurements $\myvector{z}_i$ and have the following structure:
$$\myvector{\hat x}_i = (\mymatrix{I} - \mymatrix{K}\mymatrix{H}_i) \myvector{\hat x}_i^- + \mymatrix{K} \myvector{z}_i.$$
Therefore, the errors $\myvector{\tilde x}_i$ cannot be perfectly correlated and SCI produces tighter bounds. It considers that the estimation errors are split into a correlated and an uncorrelated component as:
\begin{equation}\label{eq: SCI error decomposition}
	\myvector{\tilde x}_i = \myvector{\tilde x}_i^{(1)} + \myvector{\tilde x}_i^{(2)},
\end{equation}
where the components $\myvector{\tilde x}_i^{(1)}$ are correlated to an unknown degree while the components $\myvector{\tilde x}_i^{(2)}$ are uncorrelated between each other and with the $\myvector{\tilde x}_i^{(1)}$. The covariances of $\myvector{\tilde x}_i^{(1)}$ and $\myvector{\tilde x}_i^{(2)}$ are denoted as $\mymatrix{\tilde P}_i^{(1)}$ and $\mymatrix{\tilde P}_i^{(2)}$ (and are assumed known). For any $\myvector{\omega} \in \rK^N$, SCI provides a conservative fusion defined as:
\begin{subequations}\label{eq: SCI fusion}
	\begin{align}
		\myvector{\hat x}_F &= \mymatrix{B}_F \sum_{i=1}^N \omega_i \left(\mymatrix{\tilde P}_i^{(1)}+ \omega_i\mymatrix{\tilde P}_i^{(2)}\right)^{-1}\myvector{\hat x}_i,\\
		\mymatrix{B}_F^{-1} &= \sum_{i=1}^N \omega_i \left(\mymatrix{\tilde P}_i^{(1)}+ \omega_i\mymatrix{\tilde P}_i^{(2)}\right)^{-1}.
	\end{align}
\end{subequations}
The parameter $\myvector{\omega}$ must be chosen: optimized or empirically tuned with \eg{} the methods in \cite{niehsen2002information} or \cite{franken2005improved}.

\section{Extended SCI}\label{sec: Extended SCI}

\subsection{Motivation: Limits of the SCI fusion}

The CI fusion considers that the errors can be correlated to any degree. In distributed estimation problems, the estimates incorporate independent measurements, and therefore their errors contain independent components. The SCI fusion has been defined to exploit theses independent terms.

Moreover, the state to estimate is often disturbed by an additive process noise $\mymatrix{w}$. This noise is added to all the estimation errors during the prediction step. All the errors share a common component, which also reduces the space of admissible centralized covariance matrices $\rA$. For example, the errors cannot be perfectly negatively correlated. However, SCI cannot exploit this correlated component as it can only handle uncorrelated components. The new ESCI fusion is defined to overcome this limitation.

\subsection{Definition of the ESCI fusion}

Consider that the estimation errors are split into two components as in \eqref{eq: SCI error decomposition}. The first components $\myvector{\tilde x}_i^{(1)}$ are still correlated to an unknown degree. The second components $\myvector{\tilde x}_i^{(2)}$ are not assumed uncorrelated, but are assumed to have known second moments. Introduce the centralized errors,
\begin{align*}
	\myvector{\tilde x}_{\cent}^{(l)} &\eqdef \begin{pmatrix}
		\myvector{\tilde x}_1^{(l)\intercal} & \dots & \myvector{\tilde x}_N^{(l)\intercal}
	\end{pmatrix}^{\intercal}, & l &\in \set{1,2},
\end{align*}
whose covariances and cross-covariances are denoted as $\mymatrix{\tilde P}_{\cent}^{(l)} \eqdef \E\left[\myvector{\tilde x}_{\cent}^{(l)}\myvector{\tilde x}_{\cent}^{(l)\intercal}\right]$ and $\mymatrix{\tilde P}_{\cent}^{(1,2)} \eqdef \E\left[\myvector{\tilde x}_{\cent}^{(1)}\myvector{\tilde x}_{\cent}^{(2)\intercal}\right]$. The matrices $\mymatrix{\tilde P}_{\cent}^{(2)}$ and $\mymatrix{\tilde P}_{\cent}^{(1,2)}$ are known, but only the diagonal blocks of $\mymatrix{\tilde P}_{\cent}^{(1)}$ (corresponding to the covariances $\mymatrix{\tilde P}_i^{(1)}$) are known.
If $\mymatrix{\tilde P}_{\cent}^{(1,2)} \ne \mymatrix{0}$, the errors \eqref{eq: SCI error decomposition} are virtually re-splittable to set $\mymatrix{\tilde P}_{\cent}^{(1,2)} = \mymatrix{0}$ by letting:
\begin{subequations}
	\begin{align}
		\myvector{\tilde x}_{\cent}^{(1)} &\gets \myvector{\tilde x}_{\cent}^{(1)} - \mymatrix{\tilde P}_{\cent}^{(1,2)}(\mymatrix{\tilde P}_{\cent}^{(2)})^{-1} \myvector{\tilde x}_{\cent}^{(2)}, \\ 
		\myvector{\tilde x}_{\cent}^{(2)} &\gets \myvector{\tilde x}_{\cent}^{(2)} + \mymatrix{\tilde P}_{\cent}^{(1,2)}(\mymatrix{\tilde P}_{\cent}^{(2)})^{-1} \myvector{\tilde x}_{\cent}^{(2)}.
	\end{align}
\end{subequations}
The errors $\myvector{\tilde x}_{\cent}^{(1)}$ and $\myvector{\tilde x}_{\cent}^{(2)}$ satisfy the same properties: only the off-diagonal blocks of $\mymatrix{\tilde P}_{\cent}^{(1)}$ are unknown.
We therefore assume without loss of generality that $\mymatrix{\tilde P}_{\cent}^{(1,2)} = \mymatrix{0}$. In this splitting, $\myvector{\tilde x}_i^{(2)}$ contains all \emph{known} components. In distributed estimation, it will contain the independent measurement noise plus the common process noise as illustrated in the next section.

For any $\myvector{\omega} \in \rK^N$, the ESCI fusion is defined as:
\begin{subequations}\label{eq: ESCI fusion}
	\begin{align}
		\myvector{\hat x}_F &= \mymatrix{B}_F \mymatrix{B}_{\cent}^{-1} \mymatrix{H}\myvector{\hat x}_{\cent}, &
		\mymatrix{B}_F^{-1} &= \mymatrix{H}^{\intercal} \mymatrix{B}_{\cent}^{-1} \mymatrix{H},
	\end{align}
	with:
	\begin{align}
		\mymatrix{H} &= \myvector{1}_N \otimes \mymatrix{I}_d,\\
		\mymatrix{B}_{\cent}^{(1)} &= \diag\left(\frac{1}{\omega_1}\mymatrix{\tilde P}_1^{(1)}, \dots, \frac{1}{\omega_N}\mymatrix{\tilde P}_N^{(1)}\right), \\
		\mymatrix{B}_{\cent} &= \mymatrix{B}_{\cent}^{(1)} + \mymatrix{\tilde P}_{\cent}^{(2)}.
	\end{align}
\end{subequations}
The ESCI is a generalization of the SCI in the sense that if $\mymatrix{P}_{\cent}^{(2)}$ is block diagonal, then \eqref{eq: ESCI fusion} and \eqref{eq: SCI fusion} define the same fusion.

\begin{theorem}
	For any $\myvector{\omega}\in \rK^N$, the ESCI fusion defined in \eqref{eq: ESCI fusion} is conservative for the set:
	\begin{multline}
		\rA_\mathrm{ESCI} \eqdef \left\{\mymatrix{P}_{\cent}^{(1)} + \mymatrix{\tilde P}_{\cent}^{(2)} \suchthat \mymatrix{P}_{\cent}^{(1)} \succeq \mymatrix{0} \text{ and } \right.\\
		\left. \forall i \in \intEnt{1}{N}, \, \mymatrix{P}_{i}^{(1)} = \mymatrix{\tilde P}_i^{(1)} \right\}.
	\end{multline}
\end{theorem}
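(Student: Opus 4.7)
The plan is to verify the conservativeness condition $\mymatrix{K}\mymatrix{P}_{\cent}\mymatrix{K}^{\intercal} \preceq \mymatrix{B}_F$ directly for every admissible $\mymatrix{P}_\cent = \mymatrix{P}_\cent^{(1)} + \mymatrix{\tilde P}_\cent^{(2)} \in \rA_{\mathrm{ESCI}}$, where $\mymatrix{K} \eqdef \mymatrix{B}_F \mymatrix{H}^{\intercal}\mymatrix{B}_\cent^{-1}$ is the gain implicit in \eqref{eq: ESCI fusion}, i.e.\ such that $\myvector{\hat x}_F = \mymatrix{K}\myvector{\hat x}_\cent$. The constraint $\sum_i \mymatrix{K}_i = \mymatrix{I}_d$ rewrites as $\mymatrix{K}\mymatrix{H} = \mymatrix{I}_d$ and is automatic from the definition of $\mymatrix{B}_F^{-1}=\mymatrix{H}^\intercal \mymatrix{B}_\cent^{-1}\mymatrix{H}$. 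Since $\mymatrix{\tilde P}_\cent^{(2)}$ is fully known and enters linearly into $\mymatrix{B}_\cent = \mymatrix{B}_\cent^{(1)} + \mymatrix{\tilde P}_\cent^{(2)}$, the strategy is to split the target inequality into a ``$(2)$-part'' absorbed exactly by $\mymatrix{B}_F$ and a ``$(1)$-part'' that must be controlled using only the diagonal blocks $\mymatrix{\tilde P}_i^{(1)}$.

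First I would dispose of the purely algebraic part by checking the identity $\mymatrix{K}\mymatrix{B}_\cent\mymatrix{K}^{\intercal} = \mymatrix{B}_F$. Substituting $\mymatrix{K}$ and using $\mymatrix{B}_F^{-1} = \mymatrix{H}^{\intercal}\mymatrix{B}_\cent^{-1}\mymatrix{H}$ immediately yields
\[
\mymatrix{K}\mymatrix{B}_\cent\mymatrix{K}^{\intercal} = \mymatrix{B}_F\bigl(\mymatrix{H}^{\intercal}\mymatrix{B}_\cent^{-1}\mymatrix{H}\bigr)\mymatrix{B}_F = \mymatrix{B}_F\mymatrix{B}_F^{-1}\mymatrix{B}_F = \mymatrix{B}_F.
\]
By linearity and the fact that $\mymatrix{\tilde P}_\cent^{(2)}$ is common to every element of $\rA_{\mathrm{ESCI}}$, the theorem is thus reduced to the single inequality
\[
\mymatrix{K}\mymatrix{P}_\cent^{(1)}\mymatrix{K}^{\intercal} \preceq \mymatrix{K}\mymatrix{B}_\cent^{(1)}\mymatrix{K}^{\intercal}
\]
to be shown for every $\mymatrix{P}_\cent^{(1)}\succeq\mymatrix{0}$ whose diagonal blocks coincide with the prescribed $\mymatrix{\tilde P}_i^{(1)}$.

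The hard part is this last inequality, since it is the only step where the partial-information hypothesis on $\mymatrix{P}_\cent^{(1)}$ must actually be used, and it is the place where the standard CI/SCI Jensen-type argument enters. To prove it, I would realise the left-hand side as $\E\bigl[\mymatrix{K}\myvector{\tilde x}_\cent^{(1)}\myvector{\tilde x}_\cent^{(1)\intercal}\mymatrix{K}^{\intercal}\bigr]$ for a zero-mean random vector $\myvector{\tilde x}_\cent^{(1)}$ of covariance $\mymatrix{P}_\cent^{(1)}$, and apply pathwise the elementary matrix inequality
\[
\left(\sum_{i=1}^N \omega_i \myvector{y}_i\right)\!\left(\sum_{i=1}^N \omega_i \myvector{y}_i\right)^{\intercal} \preceq \sum_{i=1}^N \omega_i \myvector{y}_i\myvector{y}_i^{\intercal}, \qquad \myvector{\omega}\in\rK^N,
\]
which itself follows from expanding the obviously positive semi-definite matrix $\sum_i \omega_i \bigl(\myvector{y}_i - \sum_j \omega_j \myvector{y}_j\bigr)\bigl(\myvector{y}_i - \sum_j \omega_j \myvector{y}_j\bigr)^{\intercal}$. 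Choosing $\myvector{y}_i = \omega_i^{-1}\mymatrix{K}_i\myvector{\tilde x}_i^{(1)}$ so that $\sum_i \omega_i \myvector{y}_i = \mymatrix{K}\myvector{\tilde x}_\cent^{(1)}$, and taking expectations, yields
\[
\mymatrix{K}\mymatrix{P}_\cent^{(1)}\mymatrix{K}^{\intercal} \preceq \sum_{i=1}^N \frac{1}{\omega_i}\mymatrix{K}_i\mymatrix{\tilde P}_i^{(1)}\mymatrix{K}_i^{\intercal} = \mymatrix{K}\mymatrix{B}_\cent^{(1)}\mymatrix{K}^{\intercal},
\]
the right-hand side depending only on the \emph{diagonal} blocks $\mymatrix{\tilde P}_i^{(1)}$, as required. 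No information about the unknown off-diagonal blocks of $\mymatrix{P}_\cent^{(1)}$ is used.

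Combining the two steps, for every $\mymatrix{P}_\cent\in\rA_{\mathrm{ESCI}}$ one concludes
\[
\mymatrix{K}\mymatrix{P}_\cent\mymatrix{K}^{\intercal} = \mymatrix{K}\mymatrix{P}_\cent^{(1)}\mymatrix{K}^{\intercal} + \mymatrix{K}\mymatrix{\tilde P}_\cent^{(2)}\mymatrix{K}^{\intercal} \preceq \mymatrix{K}\mymatrix{B}_\cent\mymatrix{K}^{\intercal} = \mymatrix{B}_F,
\]
which is precisely the claimed conservative bound. The only non-trivial ingredient is the pathwise Jensen inequality; everything else is bookkeeping in block-matrix algebra.
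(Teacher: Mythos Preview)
Your proof is correct and follows essentially the same idea as the paper's. The only organizational difference is that the paper first establishes the stronger centralized inequality $\mymatrix{B}_\cent^{(1)} \succeq \mymatrix{P}_\cent^{(1)}$ on the full $Nd \times Nd$ level (by citing the SCI/CI literature) and then applies the gain $\mymatrix{K}$ by congruence, whereas you apply the Jensen-type inequality directly at the post-gain $d \times d$ level to obtain $\mymatrix{K}\mymatrix{P}_\cent^{(1)}\mymatrix{K}^\intercal \preceq \mymatrix{K}\mymatrix{B}_\cent^{(1)}\mymatrix{K}^\intercal$. Your pathwise argument, if applied with $\myvector{y}_i$ taken to be the block-embedded vectors $\omega_i^{-1}\myvector{e}_i \in \R^{Nd}$ rather than $\omega_i^{-1}\mymatrix{K}_i\myvector{\tilde x}_i^{(1)}$, would in fact recover the paper's stronger intermediate statement; so the two routes are really the same computation viewed at different stages, and neither buys anything the other does not.
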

\begin{proof}
	The proof is the same as for the SCI fusion \cite{julier2001general}. For any $\myvector{\omega} \in \rK^N$, the matrix $\mymatrix{B}_{\cent}^{(1)}$ is a conservative bound on the centralized covariance of $\myvector{\tilde x}_{\cent}^{(1)}$ \cite{julier2001general}. Therefore, $\mymatrix{B}_{\cent}$ is a conservative bound for the centralized covariance of $\myvector{\tilde x}_{\cent}$. The fusion is then obtained by applying the gain:
	$\mymatrix{K} = \left(\mymatrix{H}^{\intercal} \mymatrix{B}_{\cent}^{-1} \mymatrix{H}\right)^{-1} \mymatrix{H}^{\intercal} \mymatrix{B}_{\cent}^{-1}$.
\end{proof}

\subsection{Special case of a common noise}

When the correlated components of the errors come from a common noise $\myvector{w}$, the ESCI expressions can be simplified. Consider that the estimation errors are split as :
\begin{equation}\label{eq: ESCI error decomposition}
	\myvector{\tilde x}_i = \mymatrix{\tilde x}_i^{(1)} + \myvector{\tilde x}_i^{\mathrm{(ind)}} + \mymatrix{M}_i \myvector{w},
\end{equation}
where the components $\mymatrix{\tilde x}_i^{(1)}$ are correlated to an unknown degree, the components $\myvector{\tilde x}_i^{\mathrm{(ind)}}$ are uncorrelated between each other, with the $\myvector{\tilde x}_i^{(1)}$ and with $\myvector{w}$, the matrices $\mymatrix{M}_i$ are known, and $\myvector{w}$ is a common independent noise. The covariances of each component are known and denoted as $\mymatrix{P}_i^{\mathrm{(ind)}}$ for $\myvector{\tilde x}_i^{\mathrm{(ind)}}$ and $\mymatrix{Q}$ for $\myvector{w}$. In this case the fusion \eqref{eq: ESCI fusion} becomes:
\begin{subequations}\label{eq: ESCI fusion with common noise}
	\begin{align}
		\myvector{\hat x}_F &= \mymatrix{B}_F\sum_{i=1}^N \omega_i(\mymatrix{I}_d - \mymatrix{S}_1\mymatrix{S}_0^{-1} \mymatrix{M}_i^{\intercal}) \mymatrix{\tilde P}_i^{\prime{-1}}\myvector{\hat x}_i, \\
		\mymatrix{B}_F^{-1} &= \sum_{i=1}^N \omega_i \mymatrix{\tilde P}_i^{\prime{-1}} - \mymatrix{S}_1 \mymatrix{S}_0^{-1} \mymatrix{S}_1^{\intercal},
	\end{align}
	with $\mymatrix{\tilde P}_i' \eqdef \mymatrix{\tilde P}_i^{(1)} + \omega_i \mymatrix{\tilde P}_i^{\mathrm{(ind)}}$ and :
	\begin{align}
		\mymatrix{S}_0 &= \sum_{i=1}^N \omega_i \mymatrix{M}_i^{\intercal}  \mymatrix{\tilde P}_i^{\prime{-1}} \mymatrix{M}_i + \mymatrix{Q}^{-1},\\
		\mymatrix{S}_1 &= \sum_{i=1}^N \omega_i  \mymatrix{\tilde P}_i^{\prime{-1}} \mymatrix{M}_i.
	\end{align}
\end{subequations}
The advantage of \eqref{eq: ESCI fusion with common noise} over \eqref{eq: ESCI fusion} is that  \eqref{eq: ESCI fusion with common noise} requires to invert $N+1$ matrices of size $d$ while $\eqref{eq: ESCI fusion}$ requires the inversion of one matrix of size $Nd$. As the cost of an inversion of a matrix of size $n$ is a $O(n^3)$, \eqref{eq: ESCI fusion with common noise} is more efficient.

If all the matrices $\mymatrix{M}_i = \mymatrix{I}_d$, then \eqref{eq: ESCI fusion with common noise} simplifies further to:
\begin{align}\label{eq: ESCI fusion with common additive noise}
	\myvector{\hat x}_F &=\mymatrix{B}_0 \sum_{i=1}^N \omega_i \mymatrix{\tilde P}_i^{\prime-1} \myvector{\hat x}_i, &
	\mymatrix{B}_F &= \mymatrix{B}_0 + \mymatrix{Q},
\end{align}
where $\mymatrix{B}_0^{-1} = \sum_{i=1}^N \omega_i\mymatrix{\tilde P}_i^{\prime-1}$.
This case is equivalent to first fuse the uncorrupted estimates $\myvector{\tilde x}_i^{(1)} + \myvector{\tilde x}_i^{\mathrm{(ind)}}$ using SCI and then add the noise.

\begin{figure}
	\centering
	\null\hfill
	\subfloat[CI fusion.]{\input{fig/fusion_ci.tex}}
	\hfill
	\subfloat[SCI fusion.]{\input{fig/fusion_sci.tex}}
	\hfill
	\subfloat[ESCI fusion.]{\input{fig/fusion_gsci.tex}}
	\hfill\null
	\caption{Comparison of the bounds provided by CI, SCI and ESCI. The dotted ellipses represent the covariances $\mymatrix{\tilde P}_1$ and $\mymatrix{\tilde P}_2$, the grey ellipses are the bound obtained with $\myvector{\omega} = \begin{pmatrix} 2k/10 & 1-2k/10 \end{pmatrix}^{\intercal}$ and $k \in \intEnt{0}{5}$, and the dark ellipse is the bound that minimizes the trace. The numerical values used are: $\mymatrix{\tilde P}_1^{(1)} = [[1, -2], [-2, 5]]$, $\mymatrix{\tilde P}_2^{(1)} = [[9, -1], [-1, 1]]$, $\mymatrix{\tilde P}_1^{\mathrm{(ind)}} = [[2, 0], [0, 9]]$, $\mymatrix{\tilde P}_2^{\mathrm{(ind)}} = [[9, 3], [3, 2]]$, $\mymatrix{Q} = [[2, 2], [2, 2]]$, and $\mymatrix{M}_1 = \mymatrix{M}_2 = \mymatrix{I}$.}
	\label{fig: Comparison of the fusions}
\end{figure}
To illustrate the interest of the ESCI fusion, consider the fusion of two estimates whose errors are split according to \eqref{eq: ESCI error decomposition}. Theses estimates can be split using CI (without considering the splitting), using SCI (by grouping the correlated component $\mymatrix{M}_i \myvector{w}$ with the first component), or using ESCI. Figure~\ref{fig: Comparison of the fusions} compares the bounds obtained with the three fusions. It can be observed that the ESCI bounds are tighter, as expected.

\section{Distributed estimation algorithms}\label{sec: Algorithms}

This section presents the integration of the ESCI fusion into distributed estimation algorithms.

\subsection{System model}

Consider a system parameterized by a discrete-time state-space model. The state at time $k \in \N$ is denoted as $\myvector{x}(k) \in \R^d$. It is assumed to follow the following linear dynamics:
\begin{subequations}
	\begin{align}
		\myvector{x}(0) &\sim \rN(\myvector{x}_0, \mymatrix{\tilde P}_0), \\
		\forall k \in \N, \qquad \myvector{x}(k+1) &= \mymatrix{F}\myvector{x}(k) + \myvector{w}(k+1),
	\end{align}
	where $\myvector{x}_0$ is the initial state, $\mymatrix{\tilde P}_0$ the covariance matrix of the initial uncertainty, $\mymatrix{F}\in\R^{d \times d}$ the evolution matrix, and $\myvector{w}(k)$ the process noise at time $k$.
\end{subequations}
The system is estimated by a network of $N$ agents. The agents are equipped with sensors to observe the state, and have (limited) communication capabilities with their neighbors in the network. As the estimation algorithms are symmetrical between the agents, a focus is made on one particular agent indexed by $i \in \intEnt{1}{N}$. The set of neighbors of Agent~$i$ is denoted as $\rN_i$ and its neighborhood as $\rM_i \eqdef \rN_i \cup \set{i}$. At each time step $k \in \N$, Agent~$i$ performs the measurement $\myvector{z}_i(k) \in \R^{m_i}$:
\begin{equation}
	\myvector{z}_i(k) = \mymatrix{H}_{i} \myvector{x}(k) + \myvector{v}_i(k),
\end{equation}
where $\mymatrix{H}_i \in \R^{m_i \times d}$ is the observation matrix and $\myvector{v}_i(k)$ is the observation noise at time $k$.

The process noise and the observation noises of the agents are assumed: $(i)$ zero-mean, $(ii)$ white, and $(iii)$ uncorrelated between each other and with the error of the initial state. The covariance matrix of the process noise is denoted as $\mymatrix{Q}$ and that of the measurement noise of Agent~$i$ as $\mymatrix{R}_i$. 
In this model, all the matrices have been assumed time-independent for the sake of clarity. This work can however be adapted with time-varying matrices.

The estimate of Agent~$i$ at time $k$ based on the measurements up to time $l$ is denoted as $\myvector{\hat x}_i(k|l)$. The estimation error is $\myvector{\tilde x}_i(k|l) \eqdef \myvector{\hat x}_i(k|l) - \myvector{x}(k)$. Agent~$i$ also estimates a conservative bound, denoted as $\mymatrix{P}_i(k|l)$, of the covariance of its error. Here, conservative means that:
\begin{equation}
	\mymatrix{P}_i(k|l) - \E\left[\myvector{\tilde x}_i(k|l)\myvector{\tilde x}_i(k|l)^{\intercal}\right] \succeq \mymatrix{0}.
\end{equation}
At each time step, Agent~$i$ exchanges information with its neighbors $j \in \rN_i$. Three levels of communication are considered, labeled L1, L2 and L3. They restrict the amount of information that the agents can send (Level L1 being the more restrictive). The information transmitted in each level is given in Table~\ref{tab: Transmitted parameters} and is described in the next paragraph.

\subsection{Algorithm description}

The distributed estimation algorithm described below was proposed in \cite{julier2001general} for Levels L1 and L2. For Level L3, there are slight modifications, the resulting algorithm is the DKF algorithm proposed in \cite{hu2011diffusion}. These two algorithms are represented by the diagrams in Figure~\ref{fig: Schema distributed algorithms}. The estimation algorithm has four steps detailed below from the perspective of Agent~$i$.
\begin{figure}
	\centering
	\subfloat[Without measurement transmissions (Levels L1 and L2).]{\fontsize{6pt}{1em}{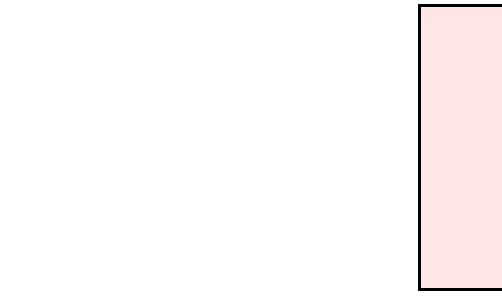}}\\
	\subfloat[With measurement transmissions (Level L3).]{\fontsize{6pt}{1em}{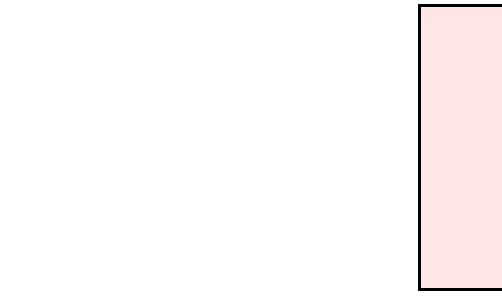}}
	\caption{Diagrams of the distributed algorithms.}
	\label{fig: Schema distributed algorithms}
\end{figure}
\begin{enumerate}
	\item \textbf{Prediction.} The state is predicted using the evolution model.
	\begin{subequations}
		\begin{align}
			\myvector{\hat x}_i(k|k-1) &= \mymatrix{F} \myvector{\hat x}_i(k-1|k-1), \\
			\mymatrix{P}_i(k|k-1) &= \mymatrix{F} \mymatrix{P}(k-1|k-1) \mymatrix{F}^{\intercal} + \mymatrix{Q}.
		\end{align}
	\end{subequations}
	\item \textbf{Update.} The prediction is updated using the measurement $\myvector{z}_i(k)$ (as in a Kalman filter).
	\begin{subequations}
		\begin{align}
			\myvector{\hat x}_i^\mathrm{(a)}(k|k) &= \mymatrix{P}_i^\mathrm{(a)}(k|k)\mymatrix{P}_i(k|k-1)^{-1}\myvector{\hat x}_i(k|k-1) \notag{}\\
			&\qquad+ \mymatrix{P}_i^\mathrm{(a)}(k|k)\mymatrix{H}_i^{\intercal} \mymatrix{R}_i^{-1} \myvector{z}_i(k), \\
			\mymatrix{P}_i^\mathrm{(a)}(k|k)^{-1} &= \mymatrix{P}_i(k|k-1)^{-1} + \mymatrix{H}_i^{\intercal} \mymatrix{R}_i^{-1} \mymatrix{H}_i.
		\end{align}
	\end{subequations}
	The superscript $\mathrm{(a)}$ (for autonomous) indicates that this estimate is obtained solely  from the measurements of Agent~$i$. Agent~$i$ sends this estimate to its neighbors, and receives their estimates. The complete list of parameters transmitted for each communication level is given in Table~\ref{tab: Transmitted parameters}.
	\item \textbf{Fusion.} The estimates received $\myvector{\hat x}_j$, $j\in\rN_i$, are fused with the prediction $\myvector{\hat x}_i(k|k-1)$. This fusion provides gains $\mymatrix{K}_j$ for $j\in\rM_i$ and a conservative bound $\mymatrix{B}_F$.
	\begin{subequations}
		\begin{align}
			\myvector{\hat x}_i^F(k|k-1) &= \mymatrix{K}_i \myvector{\hat x}_i(k|k-1) + \sum_{j\in\rN_i} \mymatrix{K}_j \myvector{\hat x}_j \\
			\mymatrix{P}_i^F(k|k-1) &= \mymatrix{B}_F.
		\end{align}
	\end{subequations}
	The details of the fusion are given in Paragraph~\ref{ssec: Fusions}.
	\item \textbf{Update.} The fused estimate is updated using the measurement $\myvector{z}_i(k)$ as in step~$2$.
	\begin{subequations}\label{eq: Final update}
		\begin{align}
			\myvector{\hat x}_i(k|k) &= \mymatrix{P}_i(k|k)\mymatrix{P}_i^F(k|k-1)^{-1}\myvector{\hat x}_i(k|k-1) \notag{}\\
			&\qquad+ \mymatrix{P}_i(k|k)\mymatrix{H}_i^{\intercal} \mymatrix{R}_i^{-1} \myvector{z}_i(k), \\
			\mymatrix{P}_i(k|k)^{-1} &= \mymatrix{P}_i^F(k|k-1)^{-1} + \mymatrix{H}_i^{\intercal} \mymatrix{R}_i^{-1} \mymatrix{H}_i.
		\end{align}
	\end{subequations}
\end{enumerate}
\begin{table}
	\centering
	\begin{tabular}{cl}
	\toprule
	Level & Parameters transmitted\\
	\midrule
	C1 & $\myvector{\hat x}_i^{(a)}(k|k)$, $\mymatrix{P}_i^{(a)}(k|k)$\\
	C2 & $\myvector{\hat x}_i^{(a)}(k|k)$, $\mymatrix{P}_i^{(a)}(k|k)$, $\mymatrix{H}_i^{\intercal} \mymatrix{R}_i^{-1} \mymatrix{H}_i$\\
	C3 & $\myvector{\hat x}_i(k|k-1)$, $\mymatrix{P}_i(k|k-1)$, $\mymatrix{H}_i^{\intercal} \mymatrix{R}_i^{-1}\myvector{z}_i(k)$, $\mymatrix{H}_i^{\intercal} \mymatrix{R}_i^{-1} \mymatrix{H}_i$\\
	\bottomrule
\end{tabular}
	\caption{Parameters transmitted by communication levels.}
	\label{tab: Transmitted parameters}
\end{table}
In Level L3, the measurements of the neighbors are optimally fused during the update of Step $4$. Therefore, there is no need to perform Step~$2$, and \eqref{eq: Final update} becomes:
\begin{subequations}
	\begin{align}
		\myvector{\hat x}_i(k|k) &= \mymatrix{P}_i(k|k)\mymatrix{P}_i^F(k|k-1)^{-1}\myvector{\hat x}_i(k|k-1) \notag{}\\
		&\qquad+ \mymatrix{P}_i(k|k)\sum_{j \in \rM_i }\mymatrix{H}_j^{\intercal} \mymatrix{R}_j^{-1} \myvector{z}_j(k), \\
		\mymatrix{P}_i(k|k)^{-1} &= \mymatrix{P}_i^F(k|k-1)^{-1} + \sum_{j \in \rM_i}\mymatrix{H}_j^{\intercal} \mymatrix{R}_j^{-1} \mymatrix{H}_j.
	\end{align}
\end{subequations}

\subsection{Fusions}\label{ssec: Fusions}

This paragraph details the fusion step of the distributed estimation algorithm. It is during this step that ESCI is used instead of the classical fusions to exploit the process noise.

\subsubsection{Level L1}
With Level L1, the fusion can only be performed using CI \eqref{eq: CI fusion}. Neither the independent components induced by the measurements, nor the correlated components induced by the process noise can be used.

\subsubsection{Level L2}
With Level L2, the authors of \cite{julier2001general} propose to fuse the estimates using SCI in order to exploit the independent component. Indeed, the error on the transmitted estimate is:
\begin{multline}
	\myvector{\tilde x}_j^\mathrm{(a)}(k|k) = \mymatrix{P}_j^\mathrm{(a)}(k|k)\mymatrix{P}_j(k|k-1)^{-1}\myvector{\tilde x}_j(k|k-1) \\
	+ \mymatrix{P}_j^\mathrm{(a)}(k|k)\mymatrix{H}_j^{\intercal} \mymatrix{R}_j^{-1} \myvector{v}_j(k).
\end{multline}
It is split as \eqref{eq: SCI error decomposition} with:
\begin{subequations}
	\begin{align}
		\myvector{\tilde x}_j^{(1)} &= \mymatrix{P}_j^\mathrm{(a)}(k|k)\mymatrix{P}_j(k|k-1)^{-1}\myvector{\tilde x}_j(k|k-1), \\
		\myvector{\tilde x}_j^{(2)} &= \mymatrix{P}_j^\mathrm{(a)}(k|k)\mymatrix{H}_j^{\intercal} \mymatrix{R}_j^{-1} \myvector{v}_j(k).
	\end{align}
\end{subequations}
The covariances of both components are computable from the parameters transmitted.
This decomposition takes advantage from the independent measurements but ignores the fact that the terms $\myvector{\tilde x}_j^{(1)}$ all contain the process noise $\myvector{w}(k)$:
\begin{equation}\label{eq: Error prediction}
	\myvector{\tilde x}_j(k|k-1) = \mymatrix{F}\myvector{\tilde x}_j(k-1|k-1) - \myvector{w}(k).
\end{equation}
The error \eqref{eq: Error prediction} can be rexpressed as \eqref{eq: ESCI error decomposition} with:
\begin{subequations}
	\begin{align}
		\myvector{\tilde x}_j^{(1)} &= \mymatrix{P}_j^\mathrm{(a)}(k|k)\mymatrix{P}_j(k|k-1)^{-1}\mymatrix{F}\myvector{\tilde x}_j(k-1|k-1), \\
		\myvector{\tilde x}_j^{\mathrm{(ind)}} &= \mymatrix{P}_j^\mathrm{(a)}(k|k)\mymatrix{H}_j^{\intercal} \mymatrix{R}_j^{-1} \myvector{v}_j(k), \\
		\mymatrix{M}_j &= -\mymatrix{P}_j^\mathrm{(a)}(k|k)\mymatrix{P}_j(k|k-1)^{-1}.
	\end{align}
	All the terms required for the ESCI fusion \eqref{eq: ESCI fusion with common noise} are also computable from the parameters transmitted.
\end{subequations}

\subsubsection{Level L3}
With Level L3, the authors of \cite{hu2011diffusion} propose to fuse the estimates using CI. However, as expressed in \eqref{eq: Error prediction}, the predicted estimates are all corrupted by the process noise. It is therefore more interesting to fuse them using ESCI \eqref{eq: ESCI fusion with common additive noise}. This fusion is equivalent to fuse the estimates $\myvector{\hat x}_j(k-1|k-1)$ using CI before performing the prediction step.

\section{Simulations}\label{sec: Simulations}

\begin{figure}
	\centering
	{\input{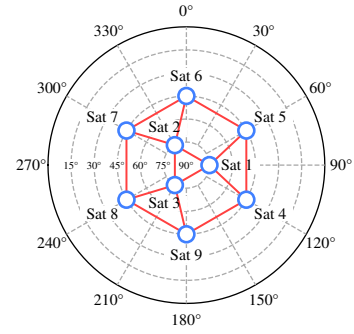}}
	\caption{Skyplot (azimuth / elevation) of the positions of the satellites. (A satellite in the center has an elevation of $90^{\circ}$, \ie{} is at zenith.) The lines represent the edges of the network.}
	\label{fig: Skyplot}
\end{figure}
To illustrate the interest of the ESCI fusion, the algorithms are applied in an example inspired by SAR missions. A distress signal is sent by an emitter of unknown position and is received by a network of satellites. The objective is to estimate the position $\myvector{p} \in \R^3$ of the emitter. To do so, the satellites measure the times of reception of the distress signal and deduce the \emph{pseudo-ranges} from the emitter. The pseudo-range is a biased version of the distance obtained by measuring the time-of-flight of the signal. The bias comes from the fact that the time of emission of the signal is unknown: the unknown clock-offet $\tau$ generates a bias $\beta \eqdef c \tau$ on the range measurement .
In this context, the state to estimate is the location of the emitter and the bias: $\myvector{x} = \begin{pmatrix}	\myvector{p}^{\intercal} & \beta \end{pmatrix}^{\intercal} \in \R^4$. This state is modeled with a slow dynamic, the evolution matrix is $\mymatrix{F} = \mymatrix{I}_4$ and the process noise has covariance $\mymatrix{Q} = \sigma_w^2 \mymatrix{I}_4$ with $\sigma_w = 5$~m.
The observation matrix of Satellite~$i$ is $\mymatrix{H}_i = \begin{bmatrix} \myvector{u}_i & 1 \end{bmatrix} \in \R^{1 \times 4}$ where $\myvector{u}_i$ is the unit vector pointing from the satellite to the emitter. The variance of the measurement noise is $R_i = \sigma_m^2$, with $\sigma_m = 10$~m. The network of satellites is represented with a skyplot in Figure~\ref{fig: Skyplot}.

The algorithms without and with measurement transmissions have been applied with the standard fusions and with the new ESCI fusion. For each fusion the parameter $\myvector{\omega}$ has been optimized to minimize the trace of the bound. Figure~\ref{fig: Simulation curves} presents the evolution of the estimated variance bounds and the MSEs computed over $10,000$ runs of $20$~iterations. We observe that for all fusions the MSEs are, as expected, lower than the bounds which confirms the conservativeness. For both algorithms, the ESCI fusions provide tighter bounds. For the algorithm without measurement transmissions, we observe reductions of the variance bounds with respect to the SCI fusion of about $19$\% for Sat.~$1$ and $11$\% for Sat.~$4$ in the horizontal plane (North / East) and of about  $23$\% for Sat.~$1$ and $18$\% for Sat.~$4$ in the vertical plane (Up). For the algorithm with measurement transmissions, the reductions of the variance bounds with respect to the CI fusion are of about $5$\% for both satellites in the horizontal plane (North / East) and of about $16$\% for Sat.~$1$ and $12$\% for Sat.~$4$ in the vertical plane (Up). This example confirms that exploiting the common process noise helps to improve the accuracy.
\begin{figure*}
	\centering
	\null\hfill
	\subfloat[Without measurement transmissions (Levels L1 and L2).]{\input{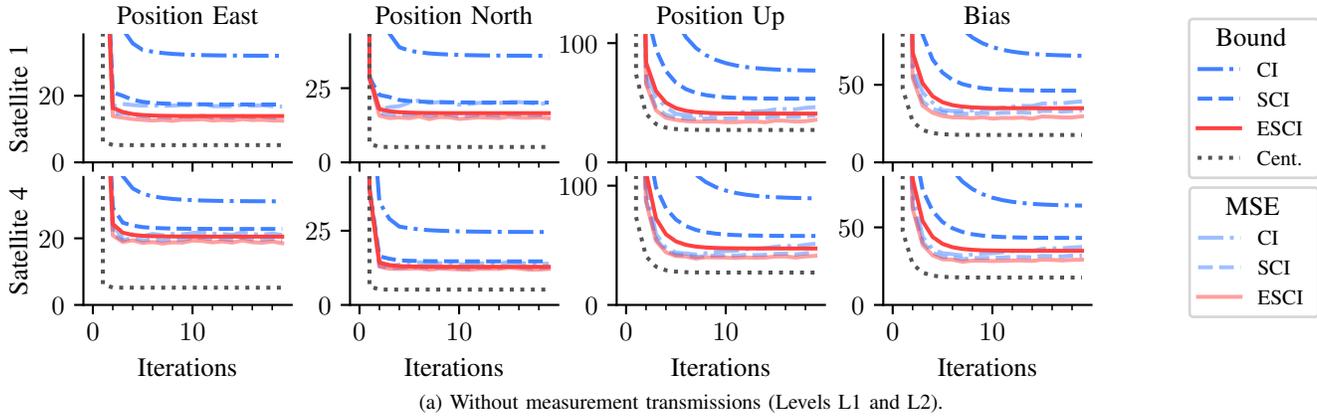}}
	\hfill
	\subfloat[With measurement transmissions (Level L3).]{\input{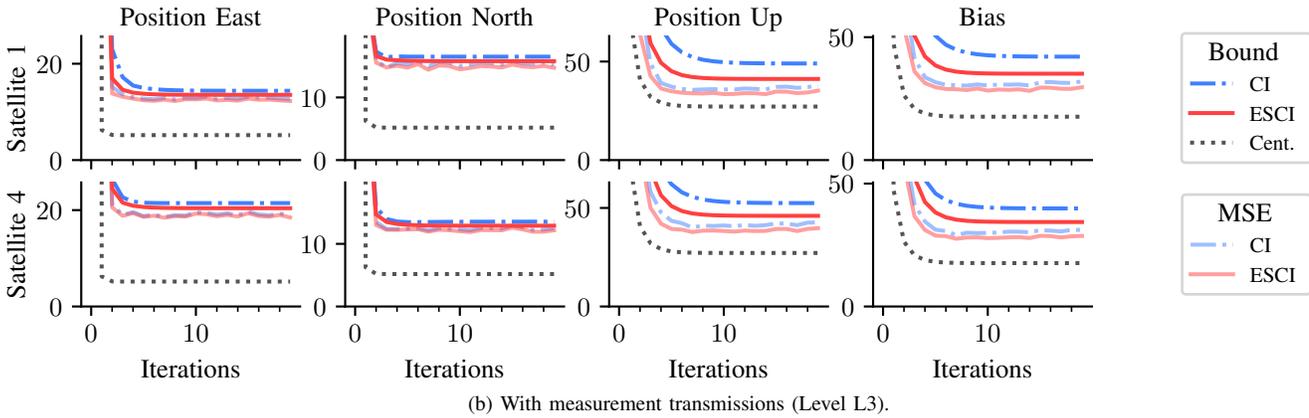}}
	\hfill\null
	\caption{Estimated variance bounds (matte curves) and MSEs (semi-transparent curves) for Satellite~$1$ and Satellite~$4$. The bound labeled '{Cent.}' were obtained with a centralized Kalman Filter to represent the optimal reachable performances.}
	\label{fig: Simulation curves}
\end{figure*}

\section{Conclusion}\label{sec: Conclusion}

This paper has introduced a new conservative fusion inspired by the SCI fusion. This fusion is designed to exploit all known components in the estimation errors, uncorrelated or not. When applied to distributed estimation problems, this fusion exploits the commonly observed process noise to produce tighter bounds. This fusion can be used with standard distributed algorithms, such as the DKF, with no additional communication requirements or complexity than the current fusions. The simulations have confirmed the interest of this fusion with significant bound reductions, particularly important for applications such as SAR.

\bibliographystyle{plain}
\bibliography{references}

\begin{thebibliography}{10}

\bibitem{ajgl2019rectification}
Ji{\v{r}}{\'\i} Ajgl and Ond{\v{r}}ej Straka.
\newblock Rectification of partitioned covariance intersection.
\newblock In {\em 2019 American Control Conference (ACC)}, pages 5786--5791.
  IEEE, 2019.

\bibitem{ajgl2020inverse}
Ji{\v{r}}{\'\i} Ajgl and Ond{\v{r}}ej Straka.
\newblock Inverse covariance intersection fusion of multiple estimates.
\newblock In {\em 2020 IEEE 23rd International Conference on Information Fusion
  (FUSION)}, pages 1--8. IEEE, 2020.

\bibitem{bar1986effect}
Yaakov Bar-Shalom and Leon Campo.
\newblock The effect of the common process noise on the two-sensor fused-track
  covariance.
\newblock {\em IEEE Transactions on aerospace and electronic systems},
  (6):803--805, 1986.

\bibitem{cattivelli2010diffusion}
Federico~S Cattivelli and Ali~H Sayed.
\newblock Diffusion strategies for distributed kalman filtering and smoothing.
\newblock {\em IEEE Transactions on automatic control}, 55(9):2069--2084, 2010.

\bibitem{franken2005improved}
Dietrich Franken and Andreas Hupper.
\newblock Improved fast covariance intersection for distributed data fusion.
\newblock In {\em 2005 7th International Conference on Information Fusion},
  volume~1, pages 7--pp. IEEE, 2005.

\bibitem{hu2011diffusion}
Jinwen Hu, Lihua Xie, and Cishen Zhang.
\newblock Diffusion kalman filtering based on covariance intersection.
\newblock {\em IEEE Transactions on Signal Processing}, 60(2):891--902, 2011.

\bibitem{julier2001general}
S.~J. Julier and J.~K. Uhlmann.
\newblock General decentralized data fusion with covariance intersection
  ({CI}).
\newblock {\em Handbook of Multisensor Data Fusion}, 2001.

\bibitem{julier1997nondivergent}
Simon~J Julier and Jeffrey~K Uhlmann.
\newblock A non-divergent estimation algorithm in the presence of unknown
  correlations.
\newblock In {\em Proceedings of the 1997 American Control Conference},
  volume~4, pages 2369--2373. IEEE, 1997.

\bibitem{julier2007using}
Simon~J Julier and Jeffrey~K Uhlmann.
\newblock Using covariance intersection for slam.
\newblock {\em Robotics and Autonomous Systems}, 55(1):3--20, 2007.

\bibitem{li2013cooperative}
Hao Li and Fawzi Nashashibi.
\newblock Cooperative multi-vehicle localization using split covariance
  intersection filter.
\newblock {\em IEEE Intelligent transportation systems magazine}, 5(2):33--44,
  2013.

\bibitem{lima2021data}
Antoine Lima, Philippe Bonnifait, Veronique Cherfaoui, and Joelle Al~Hage.
\newblock Data fusion with split covariance intersection for cooperative
  perception.
\newblock In {\em 2021 IEEE International Intelligent Transportation Systems
  Conference (ITSC)}, pages 1112--1118. IEEE, 2021.

\bibitem{niehsen2002information}
Wolfgang Niehsen.
\newblock Information fusion based on fast covariance intersection filtering.
\newblock In {\em Proceedings of the Fifth International Conference on
  Information Fusion. FUSION 2002.(IEEE Cat. No. 02EX5997)}, volume~2, pages
  901--904. IEEE, 2002.

\bibitem{noack2017decentralized}
Benjamin Noack, Joris Sijs, Marc Reinhardt, and Uwe~D Hanebeck.
\newblock Decentralized data fusion with inverse covariance intersection.
\newblock {\em Automatica}, 79:35--41, 2017.

\bibitem{petersen2011partitioned}
Arne Petersen and Marc-Andr{\'e} Beyer.
\newblock Partitioned covariance intersection.
\newblock In {\em Proceedings of International Symposium Information on Ships}.
  Citeseer, 2011.

\end{thebibliography}

\end{document}